\newtheorem{Approximation Lemma}{Approximation Lemma}
\newtheorem{Robustification Lemma}{Robustification Lemma}
\newtheorem{Covering Lemma}{Covering Lemma}
\newtheorem{Theorem}{Theorem}
\newtheorem{Lemma}{Lemma}
\newtheorem{Remark}{Remark}
\newtheorem{Compression Lemma}{Compression Lemma}
\def \exp{\text{exp}}
\begin{document}
%
\title{Capacities of classical compound quantum wiretap
and classical quantum compound wiretap channels}

\author{\IEEEauthorblockN{Minglai Cai}
\IEEEauthorblockA{Department of Mathematics\\
University of Bielefeld\\
Bielefeld, Germany\\
Email: mlcai@math.uni-bielefeld.de}
\and
\IEEEauthorblockN{Ning Cai}
\IEEEauthorblockA{The State Key Laboratory of \\
Integrated
Services Networks\\
University of Xidian\\
Xian, China\\
Email: caining@mail.xidian.edu.cn}
\and
\IEEEauthorblockN{Christian Deppe}
\IEEEauthorblockA{Department of Mathematics\\
University of Bielefeld\\
Bielefeld, Germany\\
Email: cdeppe@math.uni-bielefeld.de}}


%


\maketitle

\thispagestyle{plain}\begin{abstract}
We determine the capacity of the classical compound quantum wiretapper
channel with channel state information at the transmitter. Moreover
we derive a lower bound on the capacity of this channel
without channel state information and determine the capacity of the classical
quantum compound wiretap channel with channel state information
 at the transmitter.
\end{abstract}

\section{Introduction}
The compound channel models transmission over a channel
that may take a number of states, its capacity was determined by
\cite{Bl/Br/Th}. A compound channel with an eavesdropper
is called a compound wiretap channel. It is defined as a family of pairs of
channels $\{(W_t,V_t) :t=1,\cdots,T\}$  with common input alphabet
and possibly different output alphabets, connecting a sender with
two receivers, one legal and one wiretapper, where $t$ is called a
state of the channel pair $(W_t, V_t)$. The legitimate receiver
accesses the output of the first channel $W_t$  in the pair
$(W_t,V_t)$, and the wiretapper observes the output of the second
part $V_t$ in the pair $(W_t,V_t)$, respectively, when a state $t$
governs the channel. A code for the channel conveys information to
the legal receiver such that the wiretapper knows nothing about the
transmitted information. This is a generalization of Wyner's wiretap
channel \cite{Wyn} to the case of multiple channel states.

We will be dealing with two communication scenarios. In the first
one only the transmitter is informed about the index $t$ (channel
state information (CSI) at the transmitter), while in the second,
the legitimate users have no information about that index at all (no
CSI).

The compound wiretap channels were recently introduced  in
\cite{Li/Kr/Po/Sh}. A upper bound on the capacity under the
condition that the average error goes to zero and the sender has no
knowledge about CSI is obtained. The result of \cite{Li/Kr/Po/Sh}
was improved in \cite{Bj/Bo/So} by
 using the stronger condition that the maximal error should go to
zero. Furthermore, the secrecy capacity for the case with CSI was
calculated.\vspace{0.15cm}

This paper is organized as follows.

In Section \ref{seccl} we present some known results for classical
 compound wiretap channel which we will use for our result's
 proof.

In Section \ref{secccqw} we derive the capacity of the classical
compound quantum wiretap channel with CSI and give a lower bound
of the capacity without CSI. In this channel model the wiretapper
uses classical quantum channels.

In Section \ref{seqcqw} we derive  the capacity of the classical
quantum compound wiretap channel with CSI. In this model both the
receiver and the wiretapper use classical quantum channels, and the
set of the states can be both finite or infinite. Here we will use an
idea which is similar to the one used in
\cite{Ah/Ca}.

\section{Classical Compound Wiretap Channels}\label{seccl}
Let $A$,$B$, and $C$ be finite sets, $P(A)$, $P(B)$, and $P(C)$ be the
sets of probability distributions on $A$, $B$ and $C$, respectively.
Let $\theta$ := $\{1,\cdots,T\}$. For every $t \in \theta$ let
$W_{t}$ be a channel $A \rightarrow P(B)$ and $V_{t}$ be a channel
$A \rightarrow P(C)$. We call $(V_t,W_t)_{t \in \theta}$ a compound
wiretap channel. $W_t^{ n}$ and $V_t^{ n}$ stand for the $n$-th
memoryless extensions of  stochastic matrices $W_t$ and $V_t$. 

Here the first family represents the communication link to the legitimate
receiver while the output of the latter is under control of the
wiretapper.

Let $X$ be a discrete random variable on a finite set $\{x_
1,\cdots,x_n\}$, with probability distribution function $p_i :=
Pr(x_i)$ for $i = 1,\cdots,n$, then the  Shannon entropy is
defined as
\[H(X):= \sum_{i=1}^{n} p_i \log p_i\text{ .}\] 
Let $X$ be a discrete random variable on a finite set $\mathfrak{X}$
with probability distribution function $P_X$, let $Y$ be a discrete
random variable on a finite set $\mathfrak{Y}$ with probability
distribution function $P_Y$, and let $P_{XY}$ be their joint
probability distribution, then the mutual information between $X$
and $Y$ is defined as
\[I(X,Y) := \sum_{x \in \mathfrak{X}, y\in \mathfrak{Y}}
P_{XY}(x,y) \log \frac{P_{XY}(x,y)}{P_X(x)P_Y(y)} \text{ .}\] 
Let $N(x|x^n)$ be
the number of occurrences of the symbol $x$ in the sequence $X^n$.
For a probability distribution $P \in P(A)$ and $\delta \geq 0$
 let typical sequences   and  conditional typical sequence be defined as :
\[ \mathcal{T}^n_P := \{x^n \in A^n : N(x|x^n) = n P(x) \forall x \in A\} \text{ ,}\]
\begin{align*}&\mathcal{T}^n_{P,\delta}
:= \{x^n \in A^n \\
&: |N(x|x^n) - n P(x)|
 \leq \delta \sqrt{nP(x)(1-P(x))}
 \forall x \in A\} \text{ .}\end{align*}

An $(n, J_n)$ code for the compound wiretap channel $(V_t,W_t)_{t
\in \theta}$  consists of  stochastic encoders $\{E\}$ : $\{
1,\cdots ,J_n\} \rightarrow P(A^n)$ and
 a collection of mutually disjoint sets $\left\{D_j
\subset B^n: j\in \{ 1,\cdots ,J_n\}\right\}$ (decoding sets).

 A non-negative number $R$ is an achievable secrecy rate for the
compound wiretap channel $(W_{t}, V_{t})$ in the case with CSI if
there is a collection of $(n, J_n)$ codes $(\{E_t: t \in \theta\},
\{D_j: j =1,\cdots,J_n\})$ such that
\[\liminf_{n \rightarrow \infty} \frac{1}{n}\log J_n \geq R \text{ ,}\]
\begin{equation} \label{b3}\lim_{n \rightarrow \infty} \max_{t \in
\theta} \max_{j\in \{ 1,\cdots ,J_n\}} \sum_{x^n \in A^n}
E_t(x^n|j)W_{t}^{n}(D_j^c|x^n)= 0 \text{ ,}\end{equation}
\begin{equation} \label{b4}\lim_{n \rightarrow \infty}
\max_{t \in \theta} I(J;Z_t^n) = 0\text{ ,}\end{equation} where $J$
is an uniformly distributed random variable with value in
$\{1,\cdots ,J_n\}$, and  $Z_t^n$ are the resulting random variables
at the output of wiretap channels  $V_t^{n}$. 

A non-negative number $R$ is an
 achievable secrecy rate for the
compound wiretap channel $(W_{t}, V_{t})$ in the case without CSI if
there is a collection of $(n, J_n)$ codes $(E, \{D_j: j=1,\cdots,J_n\})$ such that
\[\liminf_{n \rightarrow \infty} \frac{1}{n}\log J_n \geq R \text{ ,}\]
\begin{equation} \label{b3*}\lim_{n \rightarrow \infty} \max_{t \in
\theta} \max_{j\in \{ 1,\cdots ,J_n\}} \sum_{x^n \in A^n}
E(x^n|j)W_{t}^{n}(D_j^c|x^n)= 0 \text{ ,}\end{equation}
\begin{equation} \label{b4'}\lim_{n \rightarrow \infty}
\max_{t \in \theta} I(J;Z_t^n) = 0\text{ .}\end{equation}

\begin{Remark} A weaker and widely used security criterion is
obtained if we replace  (\ref{b4}), respectively (\ref{b4'}), with
$\lim_{n \rightarrow \infty}\max_{t \in \theta}\frac{1}{n}
 I(J;Z_t^n) = 0\text{ .}$\end{Remark}\vspace{0.15cm}

In case with CSI, let $p'_t (x^n):= \begin{cases} \frac{p_t^{
n}(x^n)}{p_t^{ n}
(\mathcal{T}^n_{p_t,\delta})} \text{ ,}& \text{if } x^n \in \mathcal{T}^n_{p_t,\delta}\\
0 \text{ ,}& \text{else} \end{cases}$\\
and $X^{(t)} := \{X_{j,l}^{(t)}\}_{j \in \{1, \cdots, J_n\}, l \in
\{1, \cdots, L_{n,t}\}}$  be a family of random matrices whose
entries are i.i.d. according to $p'_t$.

It was shown in \cite{Bj/Bo/So} that for any $\omega > 0$, if we set
\[J_n = \lfloor 2^{n(\min_{t \in \theta}(I(p_t,V_t)-\frac{1}{n}\log
L_{n,t})-\mu} \rfloor\text{ ,}\]
where $\mu$ is a positive constant  which does not depend
on $j$, $t$, and can be arbitrarily small when $\omega$ goes to $0$,
then there are such $\{D_j:j=1,\cdots,J_n\}$
that for all $t \in \theta$
\begin{align}&  Pr\left( \sum_{j=1}^{J_n} \frac{1}{J_{n}}
\sum_{l=1}^{L_{n,t}} \frac{1}{L_{n,t}} W_{t}^n
(D_j^c|X_{j,l}^{(t)}) > \sqrt{T}2^{-n\omega /2}\right) \notag\\
&\leq \sqrt{T}2^{-n\omega /2}\text{ .}\label{b6}\end{align} Since here only the
error of the legitimate receiver is analyzed, so for the result above
just the channels $V_t$, but not those of the wiretapper,
are regarded.

In view of  (\ref{b6}), one has (see \cite{Bj/Bo/So})\\
the largest achievable rate, called capacity,
of the compound wiretap channel with CSI at the
transmitter $C_{S,CSI}$, is given by
\begin{equation}\label{b1}
C_{S,CSI}= \min_{t \in \theta} \max_{V\rightarrow A \rightarrow
(BZ)_t}(I(V,B_t)-I(V,Z_t))\text{ ,}
\end{equation}
where $B_t$ are the resulting random variables at the output of
legal receiver channels. $Z_t$ are the resulting random
variables at the output of wiretap channels.

Analogously, in case without CSI, the idea is similar to the case
with CSI: Let $p' (x^n):= \begin{cases} \frac{p^{ n}(x^n)}{p^{ n}
(\mathcal{T}^n_{p,\delta})} & \text{if } x^n \in \mathcal{T}^n_{p,\delta}\\
0 & \text{else} \end{cases}$\\
 and $X^n := \{X_{j,l}\}_{j \in
\{1, \cdots, J_n\}, l \in \{1, \cdots, L_{n}\}}$  be a family of
random matrices whose components are i.i.d. according to $p'$.

For any $\omega > 0$, define
\[J_n = \lfloor 2^{n(\min_{t \in
\theta}(I(p_t,V_t)-\frac{1}{n}\log L_{n})-\mu} \rfloor\text{ ,}\]
where $\mu$ is a positive constant  which does not depend
on $j$, $t$, and can be arbitrarily small when $\omega$ goes to $0$,
then there are such $\{D_j:j=1,\cdots,J_n\}$
that for all $t \in \theta$
\begin{align}& Pr\left( \sum_{j=1}^{J_n} \frac{1}{J_{n}}
\sum_{l=1}^{L_{n}} \frac{1}{L_{n}} W_{t}^n (D_j(X)^c|X_{j,l}) >
\sqrt{T}2^{-n\omega /2}\right)\notag\\ & \leq \sqrt{T}2^{-n\omega
/2}\text{ .} \label{b6'}\end{align}
Using (\ref{b6'}) one can obtain (see \cite{Bj/Bo/So}) that
the secrecy capacity of the compound wiretap channel without CSI at
the transmitter $C_{S}$ is lower bounded as follows,
\begin{equation}
C_{S} \geq \max_{V\rightarrow A \rightarrow (BZ)_t}(\min_{t \in
\theta} I(V,B_t)-\max_{t \in \theta}I(V,Z_t))\text{ .} \label{b1'}
\end{equation}

\section{Classical Compound Quantum Wiretap Channels}\label{secccqw}
Let $A$ and $B$ be finite sets, and let $H$ be a finite-dimensional
complex Hilbert space. Let $P(A)$ and $P(B)$ be the sets of
probability distributions on $A$ and $B$ respectively, and
$\mathcal{S}(H)$ be the space of self-adjoint, positive-semidefinite
bounded linear operators with trace $1$ on $H$. Let $\theta :=
\{1,\cdots,T\}$ and for every $t \in \theta$ let $W_{t}$ be a
channel $A \rightarrow P(B)$ and $V_{t}$ be a classical quantum
channel, i.e., a map $A \rightarrow \mathcal{S}(H)$: $A \ni x
\rightarrow V_t(x) \in H$. We define $(V_t, W_t)_{t \in \theta}$ as
a classical compound quantum wiretap channel. Associate to $V_t$ is
the channel map on n-block $V_t^{\otimes n}$: $A^n \rightarrow
\mathcal{S}(H^{\otimes n})$ with $V_t^{\otimes n}(x^n) := V_t(x_1)
\otimes \cdots \otimes V_t(x_n)$.

For a state $\rho$, the von Neumann entropy  is defined as
\[S(\rho) := -\mathrm{tr} (\rho \log \rho)\text{ .}\]
Let $P$ be a probability distribution over a finite set $J$, and
$\Phi := \{\rho(x) : x\in J\}$  be a set of states labeled by
elements of $J$. Then the  Holevo $\chi$ quantity is
defined as
\[\chi(P,\Phi):= S\left(\sum_{x\in J} P(x)\rho(x)\right)-
\sum_{x\in J} P(x)S\left(\rho(x)\right)\text{ .}\]

An $(n, J_n)$ code for the  classical compound quantum wiretap
channel $(V_t,W_t)_{t \in \theta}$ consists of  stochastic encoders
$\{E\}$ : $\{ 1,\cdots ,J_n\} \rightarrow P(A^n)$ and
 a collection of mutually disjoint sets $\left\{D_j
\subset B^n: j\in \{ 1,\cdots ,J_n\}\right\}$ (decoding sets). 

A non-negative number $R$ is an achievable secrecy rate for the
classical compound quantum wiretap channel $(W_t,V_t)_{t \in
\theta}$ with  CSI if there is an  $(n, J_n)$ code $(\{E_t: t \in \theta\}, \{D_j: j=
1,\cdots, J_n \})$ such that
\[\liminf_{n \rightarrow \infty} \frac{1}{n}\log J_n \geq R\text{ ,}\]
\[\lim_{n \rightarrow \infty} \max_{t \in
\theta} \max_{j\in \{ 1,\cdots ,J_n\}} \sum_{x^n \in A^n}
E_t(x^n|j)W_{t}^{ n}(D_j^c|x^n)= 0\text{ ,}\]
\[\lim_{n \rightarrow \infty} \max_{t \in \theta} \chi(J;Z_t^{\otimes n}) = 0\text{ ,}\]
where $J$ is an uniformly distributed random variable with value in
$\{1,\cdots ,J_n\}$. $Z_t$ are the sets of states such that the
wiretapper will get. 

A non-negative number $R$ is an achievable
secrecy rate for the classical compound quantum wiretap channel
$(W_t,V_t)_{t \in \theta}$ without  CSI if there is an ($n$ $J_n$)
code $(E, \{D_j: j=1, \cdots, J_n\})$ such that
\[\liminf_{n \rightarrow \infty} \frac{1}{n}\log J_n \geq R\text{ ,}\]
\[\lim_{n \rightarrow \infty} \max_{t \in
\theta} \max_{j\in \{ 1,\cdots ,J_n\}} \sum_{x^n \in A^n}
E(x^n|j)W_{t}^n(D_j^c|x^n)= 0\text{ ,}\]
\[\lim_{n \rightarrow \infty} \max_{t \in \theta} \chi(J;Z_t^{\otimes n}) = 0\text{ .}\]

\begin{Theorem}\label{eq_1}
 The largest achievable rate (secrecy capacity) of the classical compound
quantum wiretap channel $(W_t,V_t)_{t \in \theta}$ in the case with
CSI $C_{S,CSI}$ at the transmitter is given by
\begin{equation}  C_{S,CSI} = \min_{t\in \theta} \max_{P \rightarrow A
\rightarrow B_t Z_t} (I(P,B_t)-\chi(P,Z_t)) \text{ .} \end{equation}
Respectively,   in the case without CSI, the secrecy capacity of the
classical compound quantum wiretap channel $(W_t,V_t)_{t \in
\theta}$ $C_{S}$ is lower bounded as follows
\begin{equation}  C_{S} \geq  \max_{P \rightarrow A
\rightarrow B_t Z_t} (\min_{t\in \theta}I(P,B_t)-\max_t \chi
(P,Z_t)) \text{ ,} \end{equation} where $B_t$ are the resulting
random variables at the output of legal receiver channels, and
$Z_t$ are the resulting random states at the output of wiretap
channels.
\end{Theorem}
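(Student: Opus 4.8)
The plan is to prove the two directions separately: a direct (coding) part establishing achievability of the stated expressions, and, for the CSI case, a converse showing that nothing larger is attainable. The structural simplification I would exploit throughout is that the legitimate receiver's channels $W_t$ are ordinary classical channels, so the reliability of decoding can be imported verbatim from the compound-channel coding statements (\ref{b6}) and (\ref{b6'}) of \cite{Bj/Bo/So}; only the secrecy constraint, which concerns the classical-quantum wiretap channels $V_t$, demands genuinely quantum arguments.

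For the direct part with CSI I would use a wiretap random code carrying a message index $j\in\{1,\ldots,J_n\}$ and a confusion index $l\in\{1,\ldots,L_{n,t}\}$. Since the transmitter knows $t$, for each state I fix a distribution $p_t$ attaining $\max_P(I(P,B_t)-\chi(P,Z_t))$ and draw the entries of $X^{(t)}=\{X_{j,l}^{(t)}\}$ i.i.d.\ from the typical-set distribution $p'_t$. I set $\frac1n\log L_{n,t}=\chi(p_t,Z_t)+\gamma$ and $\frac1n\log J_n=I(p_t,B_t)-\chi(p_t,Z_t)-\mu$, so that the total rate $\frac1n\log(J_nL_{n,t})$ stays below $I(p_t,B_t)$ (which feeds the decoder) while the confusion rate strictly exceeds $\chi(p_t,Z_t)$ (which feeds secrecy). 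Reliability of the common decoding sets $\{D_j\}$ for every $t$ is then exactly the content of (\ref{b6}).

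The heart of the argument is secrecy. For a fixed codebook the wiretapper, upon message $j$, holds the averaged state $\rho_{t,j}:=L_{n,t}^{-1}\sum_l V_t^{\otimes n}(X_{j,l}^{(t)})$, and I must drive $\chi(J;Z_t^{\otimes n})=S(J_n^{-1}\sum_j\rho_{t,j})-J_n^{-1}\sum_j S(\rho_{t,j})$ to $0$. I would invoke a quantum covering (channel-resolvability) lemma: using the typical and conditionally typical projectors of $V_t^{\otimes n}$ together with the Ahlswede--Winter operator Chernoff bound, one shows that when $\frac1n\log L_{n,t}>\chi(p_t,Z_t)$ the sampled state $\rho_{t,j}$ lies, with probability $1-2^{-n\beta}$, within trace distance $2^{-n\beta}$ of the fixed average output state $\Theta_t:=\sum_{x^n}p'_t(x^n)V_t^{\otimes n}(x^n)$, simultaneously for all $j$ after a union bound over the $2^{nR}$ messages. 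Then every $\rho_{t,j}$ and their mean $\bar\rho_t$ are exponentially close to $\Theta_t$, and the Fannes--Audenaert continuity estimate bounds $\chi(J;Z_t^{\otimes n})$ by $c\,n\,2^{-n\beta}\log\dim H+\eta(2^{-n\beta})\to0$; here the exponential (rather than merely $o(1)$) closeness, secured by the strictly positive gap $\gamma$, is precisely what lets even the unnormalized Holevo quantity vanish. Since the decoding-failure probability of (\ref{b6}) and the covering-failure probability are each below $\tfrac12$, a single codebook meeting both demands exists, and standard expurgation converts it into a deterministic code.

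The no-CSI direct part is identical except that the encoder may not depend on $t$: I fix one $P$ attaining $\max_P(\min_t I(P,B_t)-\max_t\chi(P,Z_t))$, draw one codebook from $p'$, take $\frac1n\log L_n=\max_t\chi(P,Z_t)+\gamma$ so the covering lemma applies to every $V_t$ at once, and use (\ref{b6'}) for reliability, yielding the stated lower bound. For the converse in the CSI case, any code achieving secure rate $R$ against all states in particular does so against each fixed $t$ with its encoder $E_t$, i.e.\ it is a code for the single classical-quantum wiretap channel $(W_t,V_t)$, whose secrecy capacity is $\max_P(I(P,B_t)-\chi(P,Z_t))$; hence $R\le\min_t\max_P(I(P,B_t)-\chi(P,Z_t))$. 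I expect the main obstacle to be this secrecy estimate, namely proving the quantum covering lemma with an exponentially small trace-distance bound and controlling the entropy differences by Fannes-type continuity, so that the strong criterion $\chi(J;Z_t^{\otimes n})\to0$ holds rather than only its $1/n$-normalized weakening; everything concerning the legitimate receiver is classical and already supplied by \cite{Bj/Bo/So}.
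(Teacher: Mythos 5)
Your proposal is correct and follows essentially the same route as the paper: reliability is imported from the classical compound results (\ref{b6}) and (\ref{b6'}), secrecy is obtained by a covering argument combining typical/conditionally typical projectors with the Ahlswede--Winter operator Chernoff bound to pin each averaged wiretapper state to $\Theta_t$, the Holevo quantity is then controlled by Fannes continuity, and the converse reduces to the worst single classical quantum wiretap channel. The only cosmetic differences are that you phrase the covering step as channel resolvability with an explicitly exponential trace-distance bound and close with an expurgation remark, whereas the paper fixes a small $\epsilon$ via the gentle-measurement estimate and simply exhibits a realization of positive probability satisfying both the reliability and covering events.
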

\begin{proof}
\it 1) Lower bound \rm

Let $p'_t$, $X^{(t)}$, and $D_j$ be defined like in classical case.
Then (\ref{b6}) still holds since the sender transmits through a
classical channel to the legitimate receiver. We abbreviate
$\mathcal{X} := \{X^{(t)}: t\in \theta\}$.\vspace{0.15cm}

$\Big($Analogously, in the case without CSI, let $p'$ $X^{n}$ and $D_j$
be defined like in classical case, then (\ref{b6'}) still
holds.$\Big)$ \vspace{0.15cm}

For $\rho \in \mathcal{S}(H)$ and $\alpha > 0$ there exists an
orthogonal subspace projector $\Pi_{\rho ,\alpha}$ commuting with
$\rho ^{\otimes n}$ and satisfying
\begin{equation} \label{te1} \mathrm{tr} \left( \rho ^{\otimes n}
 \Pi_{\rho ,\alpha} \right) \geq 1-\frac{d}{\alpha ^2}\text{ ,}\end{equation}
\begin{equation} \label{te2} \mathrm{tr} \left( \Pi_{\rho ,\alpha} \right)
 \leq 2^{n S(\rho) + Kd\alpha \sqrt{n}}\text{ ,}\end{equation}
\begin{equation} \label{te3}  \Pi_{\rho ,\alpha} \cdot \rho ^{\otimes n} \cdot \Pi_{\rho ,\alpha} \leq
2^{ -nS(\rho) + Kd\alpha \sqrt{n}}\Pi_{\rho ,\alpha}\text{
,}\end{equation} where $a := \#\{A\}$,  and $K$ is a constant which is in polynomial order of $n$.\\
For $P\in P(A)$, $\alpha > 0$ and $x^n \in \mathcal{T}^n_P$  there
exists an orthogonal subspace projector $\Pi_{V,\alpha}(x^n)$
commuting with $V^{\otimes n}_{x^n}$ and satisfying:
\begin{equation} \label{te4} \mathrm{tr} \left( V^{\otimes n}(x^n) \Pi_{V,\alpha}(x^n) \right)
 \geq 1-\frac{ad}{\alpha ^2}\text{ ,}\end{equation}
\begin{equation} \label{te5} \mathrm{tr} \left( \Pi_{V,\alpha}(x^n) \right)
\leq 2^{n S(V|P) + Kad\alpha \sqrt{n}}\text { ,}\end{equation}
\begin{align}   &  \Pi_{V,\alpha}(x^n) \cdot V^{\otimes n}(x^n) \cdot \Pi_{V,\alpha}(x^n) \notag\\
&\leq 2^{ -nS(V|P) + Kad\alpha
\sqrt{n}}\Pi_{V,\alpha}(x^n)\text{ ,}\label{te6} \end{align}
\begin{equation} \label{te7}  \mathrm{tr} \left(  V^{\otimes n}(x^n) \cdot \Pi_{PV, \alpha \sqrt{a}} \right)
 \geq 1-\frac{ad}{\alpha ^2}\text{ ,}\end{equation}
where $a := \#\{A\}$, $d := \dim H$, and $K$ is a constant which is
in polynomial order of $n$ (see \cite{Wil}).\vspace{0.15cm}

 Let
\[Q_{t}(x^n) := \Pi_{PV_t, \alpha \sqrt{a}}\Pi_{V_t,\alpha}(x^n)
 \cdot V_{t}^{\otimes n}(x^n) \cdot \Pi_{V_t,\alpha}(x^n)\Pi_{PV_t, \alpha \sqrt{a}}\]\\
where  $\alpha$ will be defined later.\vspace{0.15cm}

\begin{Lemma} [see \cite{Win}] \label{eq_4a}
Let $\rho$ be a state and $X$ be a positive operator with $X  \leq
id$ (the identity  matrix) and $1 - \mathrm{tr}(\rho X)  \leq
\lambda \leq1$. Then
\begin{equation} \| \rho -\sqrt{X}\rho \sqrt{X}\| \leq \sqrt{8\lambda}\text{ .}
\end{equation}\end{Lemma}\vspace{0.15cm}

With the Lemma \ref{eq_4a}, (\ref{te1}),  (\ref{te7}), and the fact
that $\Pi_{PV_t, \alpha \sqrt{a}}$ and $\Pi_{V_t,\alpha}(x^n)$ are
both projection matrices,  for any $t$ and $x^n$ it holds:
\begin{equation} \label{eq_4}\|Q_{t}(x^n)-V_{t}^{\otimes n}(x^n)\| \leq
\frac{\sqrt{8(ad+d)}}{\alpha} \text{ .}\end{equation}

We set $\Theta_t:=  \sum_{x^n \in \mathcal{T}^n_{p_t,\delta}}
{p'}_t^{n}(x^n) Q_{t}(x^n)$. For given $z^n$ and $t$, $\langle
z^n|\Theta_t|z^n\rangle$  is the expected  value of $\langle z^n|
Q_{t}(x^n) |z^n\rangle$ under the condition $x^n \in
\mathcal{T}^n_{p_t,\delta}$.\vspace{0.15cm}

\begin{Lemma} [see \cite{Ahl/Win}]
Let  $\mathcal{V}$ be a finite dimensional Hilbert space, $X_1,
\cdots ,X_L$ be a  sequence of  i.i.d. random variables with values
in $\mathcal{S}(\mathcal{V})$ such that $X_i \leq \mu \cdot
id_{\mathcal{V}}$ for all $i \in \{1, \cdots , L\}$, and $\epsilon \in
]0,1[$. Let $p$ be a probability distribution on $\{X_1, \cdots
,X_L\}$, $\rho= \sum_{i} p(X_i)X_i$ be the expected value of $X_i$,
and
 $\Pi_{\rho,\lambda}'$ be the projector
onto the subspace spanned by the eigenvectors of $\rho$ whose
corresponding eigenvalues are greater than $\frac{\lambda}{\dim
\mathcal{V}} $, then
\begin{align}&
  Pr \left( \lVert L^{-1}
\sum_{i=1}^{L}X_i-\Pi_{\rho,\lambda}'\cdot\rho\cdot\Pi_{\rho,\lambda}'\rVert >  \epsilon \right)  \notag\\
& \leq 2\cdot (\dim \mathcal{V}) \exp
 \left( -L\frac{\epsilon^2\lambda}{2\ln 2(\dim \mathcal{V})\mu} \right) \text{ .}\label{eq_5}\end{align}
\end{Lemma}\vspace{0.15cm}

Let $\mathcal{V}$ be the image of  $\Pi_{P, \alpha \sqrt{a}}$. By
(\ref{te2}), we have \[\dim \mathcal{V} \leq 2^{n S(P) +
Kd\alpha\sqrt{an}}\text{ .}\] Furthermore
\begin{align}
&Q_{t}(x^n)\notag\\
&= \Pi_{PV_t, \alpha \sqrt{a}}\Pi_{V_t,\alpha}(x^n)
 \cdot V_{t}^{\otimes n}(x^n) \cdot \Pi_{V_t,\alpha}(x^n)\Pi_{PV_t, \alpha
 \sqrt{a}}\notag\\
& \leq 2^{-n(S(V_t|P) + Kad\alpha \sqrt{n})}\Pi_{PV_t, \alpha
\sqrt{a}}\Pi_{V_t,\alpha}(x^n)\Pi_{PV_t, \alpha
 \sqrt{a}}\notag\\
&\leq 2^{-n \cdot S(V_t|P) + Kad\alpha \sqrt{n} } \cdot \Pi_{PV_t, \alpha \sqrt{a}}\notag\\
&\leq 2^{-n \cdot S(V_t|P) + Kad\alpha \sqrt{n}} \cdot
id_{\mathcal{V}}\text{ .}\label{eq_5a}
\end{align}
The first inequality follows from (\ref{te6}). The second
inequality holds because $\Pi_{V_t,\alpha}$ and
$\Pi_{PV_t, \alpha \sqrt{a}}$ are projection matrices.
The third inequality holds because
$\Pi_{PV_t, \alpha \sqrt{a}}$ is a projection matrix onto
$\mathcal{V}$.\vspace{0.15cm}

Thus, by (\ref{eq_5}) and  (\ref{eq_5a})
\begin{align*}&Pr\left( \lVert \sum_{l=1}^{L_{n,t}} \frac{1}{L_{n,t}} Q_{t}(X^{(t)}_{j,l}) -
\Pi_{\Theta_t,\lambda}'\Theta_t\Pi_{\Theta_t,\lambda}' \rVert >
\frac{1}{2} \epsilon  \right)\\
 &\leq 2\cdot 2^{n (S(P) +
Kd\alpha\sqrt{an})} \\
&\cdot \exp \left( -L_{n,t}\frac{\epsilon^2}{8\ln  2} \lambda
 \cdot 2^{n(S(V_t|P)-S(P)) + Kd\alpha\sqrt{n}(\sqrt{a}-1)}
 \right)\notag\\
&=2\cdot 2^{n (S(P) + Kd\alpha\sqrt{an})} \notag\\
&\cdot \exp \left( -L_{n,t}\frac{\epsilon^2}{8\ln  2} \lambda \cdot
2^{n (-\chi(P,Z_t)) + Kd\alpha\sqrt{n}(\sqrt{a}-1)} \right)\text{
.}\end{align*} the equality in the last line holds since
\begin{align*}&S(P)-S(V_t|P) \\
&= S\left(\sum_{j} P(j) \sum_{l}\frac{1}{L_{n,t}}V_t^{\otimes n}(X^{(t)}_{j,l})\right)\\
&-\sum_{j} P(j)S\left(\sum_{l}\frac{1}{L_{n,t}}V_t^{\otimes n}(X^{(t)}_{j,l})\right)\\
&=\chi(P,Z_t)\text{ .}\end{align*} Notice that $\lVert\Theta_t-
\Pi_{\Theta_t,\lambda}'\Theta_t\Pi_{\Theta_t,\lambda}'\rVert \leq
\lambda$. Let $\lambda := \frac{1}{2}\epsilon$ and $n$ large enough
then
\begin{align}&Pr\left( \lVert \sum_{l=1}^{L_{n,t}} \frac{1}{L_{n,t}} Q_{t}(X_{j,l}) -
\Theta_t \rVert > \epsilon \right)\notag\\
& \leq2\cdot 2^{n (S(P) + Kd\alpha\sqrt{an})} \notag\\
&\cdot \exp \left( -L_{n,t}\frac{\epsilon^3}{16\ln 2 }\cdot 2^{n
(-\chi(P,Z_t)) + Kd\alpha\sqrt{n}(\sqrt{a}-1)} \right)\notag\\
& \leq \exp\left(-L_{n,t}\cdot2^{-n(\chi(P,Z_t)+\zeta)}\right)\text{
,}\label{eq_5b}\end{align} where $\zeta$ is  some suitable positive
constant, which does not depend on $j$, $t$, and can be
 arbitrarily small when $\epsilon$
goes to $0$.\vspace{0.15cm}

Let $L_{n,t} = 2^{n(\chi(P,Z_t)+2\zeta)}$ and $n$ be large enough,
then by (\ref{eq_5b}) for all $j$ it holds
\begin{equation}\label{eq_6b} Pr \left( \lVert \sum_{l=1}^{L_{n,t}}
 \frac{1}{L_{n,t}} Q_{t}(X^{(t)}_{j,l}) -\Theta_t
\rVert >  \epsilon  \right) \leq \exp(-2^{n \zeta })\end{equation} and
\begin{align} &Pr \left( \lVert \sum_{l=1}^{L_{n,t}} \frac{1}{L_{n,t}} Q_{t}(X^{(t)}_{j,l}) -\Theta_t
\rVert >  \epsilon  \text{ }\forall t\right)\notag\\
&  =1-Pr \left(\bigcup_{t}\{ \lVert \sum_{l=1}^{L_{n,t}}
\frac{1}{L_{n,t}} Q_{t}(X^{(t)}_{j,l})
-\Theta_t \rVert >  \epsilon  \} \right)\notag\\
 & \geq 1- T\exp(-2^{n \zeta })\notag\\
& \geq 1-2^{-n \upsilon} \text{ ,}\label{eq_6}
\end{align} where $\upsilon$ is some positive suitable constant
 which does not depend on $j$ and $t$.\vspace{0.15cm}

$\Big($Analogously,  in the case without CSI, let $L_{n} =  2^{n\max_{t}
(\chi(P,Z_t)+\delta)} $ and $n$ be large enough, then we can find
some positive constant $\upsilon$ so that
\begin{equation}\label{eq_6c} Pr \left( \lVert \sum_{l=1}^{L_{n}} \frac{1}{L_{n}} Q_{t}(X^{(t)}_{j,l}) -\Theta_t
\rVert > \epsilon \text{ }\forall t\right) \geq 1-2^{-n \upsilon}
\end{equation}
for all $j$.$\Big)$\vspace{0.15cm}

\begin{Remark} Since $\exp(-2^{n \zeta })$ converges to zero double exponentially faster,
 the inequality (\ref{eq_6}) remains true even if $T$ depends on $n$ and
is exponentially large over $n$,  i.e., we can still achieve
exponentially small error.
\end{Remark}\vspace{0.15cm}

From (\ref{b6}) and (\ref{eq_6}), it follows: For any $\epsilon >0$, if
$n$ is large enough then the event
\begin{align*}\left(\bigcap_{t} \left\{ \sum_{j=1}^{J_n} \frac{1}{J_{n}} \sum_{l=1}^{L_{n,t}}
\frac{1}{L_{n,t}} W_{t}^n (D_j^c(\mathcal{X})|X_{j,l}^{(t)}) \leq
\epsilon\right\}\right)\\
\cap \left(\bigcap_{j} \left\{  \lVert \sum_{l=1}^{L_{n,t}}
\frac{1}{L_{n,t}} Q_{t}(X_{j,l}^{(t)}) -\Theta_t \rVert \leq \epsilon
\text{ } \forall t \right\}\right)\end{align*} has a positive
probability.  This means that we can find a realization $x_{j,l}^{(t)}$ of
$X_{j,l}^{(t)}$
 with a positive probability such that for all $t \in \theta$,   we
have
\[ \sum_{j=1}^{J_n} \frac{1}{J_{n}}
\sum_{l=1}^{L_{n,t}} \frac{1}{L_{n,t}} W_{t}^n
(D_j^c|x_{j,l}^{(t)})\leq \epsilon\text{ ,}\]  and
\[ \lVert \sum_{l=1}^{L_{n,t}} \frac{1}{L_{n,t}} Q_{t}(x_{j,l}^{(t)}) -\Theta_t
\rVert  \leq \epsilon \text{ }\forall j\text{ .}\]

For any $\gamma
>0$ let \[R := \min_{t\in \theta} \max_{P \rightarrow A \rightarrow
B_t Z_t} \left(I(P,B_t)-\chi(P,Z_t)\right)+\gamma \text{ ,}\]  then
we have
\begin{equation}\label{eq_7}\liminf_{n \rightarrow \infty} \frac{1}{n}\log J_n \geq R \text{ ,}\end{equation}
\begin{equation}\lim_{n \rightarrow \infty} \max_{t \in
\theta} \max_{j\in \{ 1,\cdots ,J_n\}} \sum_{x^n \in A^n}
E_t(x^n|j)W_{t}^{ n}(D_j^c|x^n)= 0\text{ ,}\end{equation} where
$E_t$ is the random output of $(X_{j,l}^{(t)})_l$.

Choose a  sufficiently large but fixed $\alpha$ in (\ref{eq_4}) so
that for all $j$ it holds $\lVert V_t^{\otimes n}(x_{j,l}^{(t)}) -
 Q_{t}(x_{j,l}^{(t)})\rVert < \epsilon\text{ .}$
In this case, for any given $j' \in \{1, \cdots, J_n\}$ we have
\begin{align}
&\lVert \sum_{l=1}^{L_{n,t}} \frac{1}{L_{n,t}} V_t^{\otimes
n}(x_{j',l}^{(t)}) -
\Theta_t \rVert \notag\\
&\leq \lVert \sum_{l=1}^{L_{n,t}} \frac{1}{L_{n,t}} V_t^{\otimes
n}(x_{j',l}^{(t)}) - \sum_{l=1}^{L_{n,t}} \frac{1}{L_{n,t}}
Q_{t}(x_{j',l}^{(t)})\rVert \notag\\
&+ \lVert \sum_{l=1}^{L_{n,t}} \frac{1}{L_{n,t}} Q_{t}(x_{j',l}^{(t)}) - \Theta_t \rVert \notag\\
&\leq  \sum_{l=1}^{L_{n,t}} \frac{1}{L_{n,t}} \lVert V_t^{\otimes
n}(x_{j',l}^{(t)}) - Q_{t}(x_{j',l}^{(t)})\rVert \notag\\
&+ \lVert \sum_{l=1}^{L_{n,t}^{(t)}} \frac{1}{L_{n,t}} Q_{t}(x_{j',l}^{(t)}) - \Theta_t \rVert \notag\\
 &\leq 2\epsilon\label{eq_8}
\end{align}
and $\| \mathbb{E}_j \sum_{l=1}^{L_{n,t}} \frac{1}{L_{n,t}}
V_t^{\otimes n}(x_{j,l}^{(t)})-\Theta_t \| \leq \epsilon$ for any
probability distribution uniformly distributed on
$\{1,\cdots,J_n\}$.\vspace{0.15cm}

\begin{Lemma}[Fannes inequality \cite{Win}]\label{eq_9}
Let $\mathfrak{X}$ and $\mathfrak{Y}$ be two states in a
$d$-dimensional complex Hilbert space and
$\|\mathfrak{X}-\mathfrak{Y}\| \leq \mu < \frac{1}{e}$, then
\begin{equation} |S(\mathfrak{X})-S(\mathfrak{Y})| \leq \mu \log d
- \mu \log \mu \text{ .}\end{equation}\end{Lemma}\vspace{0.15cm}

 If $J$ is a
probability distribution uniformly distributed on
$\{1,\cdots,J_n\}$, then from the inequality (\ref{eq_8}) and Lemma
\ref{eq_9} we have
\begin{align}& \chi(J;Z_t^{\otimes n}) \notag\\
&=S\left(\mathbb{E}_j \sum_{l=1}^{L_{n,t}}
\frac{1}{L_{n,t}} V_t^{\otimes n}(x_{j,l}^{(t)})\right) \notag\\
&- \sum_{j=1}^{J_n}
J(j)S\left(\sum_{l=1}^{L_{n,t}}
 \frac{1}{L_{n,t}}V_t^{\otimes n}(x_{j,l}^{(t)})\right)\notag\\
  &\leq \vert  S\left(\mathbb{E}_j \sum_{l=1}^{L_{n,t}}
\frac{1}{L_{n,t}} V_t^{\otimes n}(x_{j,l}^{(t)})\right)-S\left( \Theta_t \right) \vert \notag\\
&+\vert  S(\Theta_t )- \sum_{j=1}^{J_n} J(j)S\left( \sum_{l=1}^{L_{n,t}}
\frac{1}{L_{n,t}} V_t^{\otimes n}(x_{j,l}^{(t)})\right)\vert \notag\\
&\leq \epsilon \log d - \epsilon \log \epsilon\notag\\
&+\vert \sum_{j=1}^{J_n} J(j) \left[S(\Theta_t )-S\left(
\sum_{l=1}^{L_{n,t}}\frac{1}{L_{n,t}} V_t^{\otimes n}(x_{j,l}^{(t)})
\right)\right] \vert \notag\\
 &\leq 3\epsilon \log d - \epsilon \log \epsilon -2\epsilon \log 2\epsilon\text{ .}\end{align}

We have
\begin{equation}\label{e10} \lim_{n \rightarrow \infty}
\max_{t \in \theta} \chi(J;Z_t^{\otimes n}) =
0\text{.}\end{equation}

 $\Big($Analogously,  in the case without CSI,  we
can find a realization $x_{j,l}^n$ of $X_{j,l}^{(t)}$
 with a positive probability such that:  For all $t \in \theta$,   we
have
\[ \sum_{j=1}^{J_n} \frac{1}{J_{n}}
\sum_{l=1}^{L_{n}} \frac{1}{L_{n}} W_{t}^n (D_j^c|x_{j,l})\leq
\epsilon\text{ ,}\]
\[ \lVert \sum_{l=1}^{L_{n}} \frac{1}{L_{n}} Q_{t}(x_{j,l}) -\Theta_t
\rVert  \leq \epsilon\text{ }\forall j \text{.}\] For any $\gamma
>0$ let
\[R := \max_{P
\rightarrow A \rightarrow B_t Z_t} \left( \min_{t\in \theta}
I(P,B_t)-\max_t \chi(P,Z_t)\right)+\gamma\text{ ,}\] then we have
\begin{equation}\label{eq_7'}\liminf_{n \rightarrow \infty} \frac{1}{n}\log J_n \geq R\text{ ,}\end{equation}
\begin{equation}\lim_{n \rightarrow \infty} \max_{t \in
\theta} \max_{j\in \{ 1,\cdots ,J_n\}} \sum_{x^n \in A^n}
E(x^n|j)W_{t}^{n}(D_j^c|x^n)= 0\text{ .}\end{equation}
 From $\lVert \sum_{l=1}^{L_{n}} \frac{1}{L_{n}} V_t^{\otimes n}(x_{j',l}) -
\Theta_t \rVert \rightarrow 0$ for $ n \rightarrow 0$ it follows
\begin{equation}\label{e10'} \lim_{n \rightarrow \infty}
\max_{t \in \theta} \chi(J;Z_t^{\otimes n}) =
0\text{ ,}\end{equation} for any
 probability distribution $J$ uniformly distributed on $\{1,\cdots,J_n\}$ in
the case without CSI.$\Big)$\vspace{0.15cm}

Combining (\ref{b6}) and (\ref{e10}) (respectively (\ref{e10'}))  we obtain
\[C_{S,CSI} \geq \min_{t \in \theta} \max_{V\rightarrow A \rightarrow
B_t Z_t}(I(V,B_t)-\chi (V,Z_t)) \text{ ,} \] respectively
\[ C_{S} \geq  \max_{P \rightarrow A
\rightarrow B_t Z_t} (\min_{t\in \theta}I(P,B_t)-\max_{t\in \theta}
\chi (P,Z_t))  \text{ .}\]\\[0.15cm]
\it 2) Upper bound for case with CSI \rm

Considering  $(\mathcal{C}_n)$ is a sequence of $(n,J_n)$ code such
that
\begin{equation}\label{eq_37}
 \sup_{t\in \theta}\frac{1}{J_n}\sum_{j=1}^{J_n}\sum_{x^n\in A^n}E(x^n|j)
 W_t^{ n}(D_j^{c}|x^n)=:  \epsilon_{1,n}\text{ ,}
\end{equation}
\begin{equation}
\sup_{t\in \theta} \chi(J;Z_t^{\otimes n}) =: \epsilon_{2,n}\text{ ,}
\end{equation} where $ \lim_{n\to\infty}\epsilon_{1,n}=0$ and
$\lim_{n\to\infty}\epsilon_{2,n}=0$, $J$ denotes the random variable
which is uniformly distributed on the message set $\{1,\ldots, J_n
\}$.

Let $C(V_{t},W_{t})$ denote
the secretey capacity of the wiretap channel $(V_{t},W_{t})$ in the sense
of \cite{Wil}. Choose $t'\in\theta$ such that  $C(V_{t'},W_{t'})=\min_{t\in\theta} C(V_{t},W_{t})$.

It is well-known, in  information theory, that even in
the case without wiretapper (we have only one classical
channel $W_{t'}$), the capacity cannot exceed $I(J;B_{t'})+\xi$ for
any constant $\xi > 0$. So the capacity of  a quantum wiretap
channel $(V_{t'},W_{t'})$ cannot be greater than
\begin{align*}&I(J;B_{t'})+
\xi \\
&\leq  \lim_{n\to\infty}[I(J;B_{t'})-\chi(J;Z_{t'}^{\otimes n})]+\xi
+  \epsilon_{2,n}\\
& \leq[I(J;B_{t'})-\chi(J;Z_{t'})]+\epsilon\end{align*} for any
$\epsilon > 0$.

Since we cannot exceed the secrecy capacity of the worst wiretap
channel, we have
\begin{equation}\label{b12}
C_{S,CSI} \leq \min_{t \in \theta} \max_{V\rightarrow A \rightarrow
B_t Z_t} (I(V,B_t)-\chi (V,Z_t))\text{ .}
\end{equation}\end{proof}\vspace{0.15cm}

\section{Classical Quantum Compound Wiretap Channel with CSI}\label{seqcqw}
Let  $H$ be a finite-dimensional complex Hilbert space. Let
$\mathcal{S}(H)$  be the space of self-adjoint,
positive-semidefinite bounded linear operators on  $H$ with trace
$1$. For every  $t \in \theta$ let $W_{t}$  respectively $V_{t}$ be
quantum channels, i.e.,  completely positive trace preserving maps
$\mathcal{S}(H) \rightarrow \mathcal{S}(H)$.

An $(n, J_{n}, \lambda )$  code for the classical quantum compound
wiretap channel $(W_t,V_t)_{t \in \theta}$ consists of a family of
vectors $w:=\{w(j): j=1,\cdots,J_n\} \subset S(H^{\otimes n})$ and
 a collection of positive semi-definite operators $\left\{D_j: j\in \{ 1,\cdots ,J_n\}\right\}
 \subset S(H^{\otimes n})$
which is a partition of the identity, i.e. $\sum_{j=1}^{J_n} D_j =
id_{H^{\otimes n}}$.

A non-negative number $R$ is an achievable secrecy rate for the
classical quantum compound wiretap channel $(W_t,V_t)_{t \in
\theta}$ with CSI if there is an $(n, J_{n}, \lambda )$ code
$(\big\{w_t:=\{w_t(j):j\}:t\big\},\{D_j:j\})$ such that
\[\liminf_{n \rightarrow \infty} \frac{1}{n}\log J_n \geq R\text{ ,}\]
\[\lim_{n \rightarrow \infty} \max_{t \in \theta}
 \frac{1}{J_n} \sum_{j=1}^{J_n} \mathrm{tr}\left(
W_t^{\otimes n}\left( w_t(j) \right)D_j\right)\geq 1 - \lambda\text{
,}\]
\[\lim_{n \rightarrow \infty} \max_{t \in \theta} \chi(J;Z_t^{\otimes n})= 0 \text{ ,}\]
where $J$ is a uniformly distributed random variable with value in
$\{1,\cdots ,J_n\}$, and $Z_t$ are the sets of states such that the
wiretapper will get.

\begin{Theorem}\label{e1}
The largest achievable rate (secrecy capacity) of the classical
quantum compound wiretap channel  in the case with CSI
 is given by
\begin{equation}\label{e1q}
C_{CSI} = \lim_{n \rightarrow \infty} \min_{t \in \theta} \max_{P,
w_t}\frac{1}{n}( \chi (P,B_t^{\otimes n})- \chi (P,Z_t^{\otimes
n})){ ,}\end{equation} where $B_t$ are the resulting random states
at the output of legal receiver channels, and $Z_t$ are the
resulting random states at the output of wiretap channels.
\end{Theorem}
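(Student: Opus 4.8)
\emph{Lower bound (achievability).} The plan is to prove the matching bounds separately, reusing the architecture of the proof of Theorem~\ref{eq_1} but replacing its classical reliability input (\ref{b6}) by a genuinely quantum channel-coding argument, since $W_t$ is now a quantum channel. Fix $n$, a state $t$, and an input ensemble given by a distribution $P$ together with an assignment of $n$-block signal states $w_t(\cdot)$, with legitimate output ensemble $B_t^{\otimes n}$ and wiretapper output ensemble $Z_t^{\otimes n}$. I would draw a random code i.i.d.\ from $P$ and split it into $J_n$ bins of size $L_{n,t}$, so that the message $j$ selects the bin while a uniform index $l$ inside the bin is the local randomness used to confuse the eavesdropper --- exactly the rate-splitting behind (\ref{b6})--(\ref{eq_6}). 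Two estimates are then needed. For \emph{reliability}, the whole family of $J_n L_{n,t}$ codewords must be decodable through $W_t^{\otimes n}$ by one common measurement $\{D_j\}$ for all $t$ at once; I would obtain this from a compound quantum packing lemma, i.e.\ the Holevo--Schumacher--Westmoreland random-coding theorem made universal over the finite set $\theta$ by the Robustification Lemma, which succeeds whenever $\frac{1}{n}\log(J_n L_{n,t}) < \frac{1}{n}\chi(P,B_t^{\otimes n}) - \mu$. For \emph{secrecy}, the Covering Lemma (the operator Chernoff bound already invoked as (\ref{eq_5})) forces the bin-averaged wiretapper state to concentrate around its mean $\Theta_t$ once $L_{n,t}\approx 2^{\chi(P,Z_t^{\otimes n})}$, and a union bound over the finite $\theta$ makes this hold for all $t$ simultaneously, precisely as in (\ref{eq_6}); Fannes' inequality (Lemma~\ref{eq_9}) then yields $\chi(J;Z_t^{\otimes n})\to 0$ as in (\ref{e10}). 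Intersecting the two high-probability events exhibits a deterministic code of rate $\frac{1}{n}\log J_n \approx \frac{1}{n}(\chi(P,B_t^{\otimes n}) - \chi(P,Z_t^{\otimes n}))$; optimizing over $P$ and $w_t$, taking the worst $t$, and letting $n\to\infty$ gives the lower bound.

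\emph{Upper bound (converse).} Mirroring part~2) of the proof of Theorem~\ref{eq_1}, I would take any code sequence meeting the reliability and secrecy constraints and choose $t'$ with $C(V_{t'},W_{t'}) = \min_{t\in\theta} C(V_t,W_t)$. A common decoder that works for every $t$ works in particular for $t'$, so the code is admissible for the single quantum wiretap channel $(W_{t'},V_{t'})$ and its rate cannot exceed that channel's private capacity. The Holevo bound for the legitimate link, combined with the secrecy constraint $\chi(J;Z_{t'}^{\otimes n})\le \epsilon_{2,n}\to 0$ and Fannes continuity, bounds this capacity by $\lim_{n\to\infty}\max_{P,w_{t'}}\frac{1}{n}(\chi(P,B_{t'}^{\otimes n}) - \chi(P,Z_{t'}^{\otimes n}))$, so $C_{CSI}$ is at most the minimum of this quantity over $t$, matching the lower bound.

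\emph{Main obstacle.} The analytic steps (Chernoff concentration, Fannes continuity, the Holevo converse) transfer essentially verbatim from Theorem~\ref{eq_1}. The genuinely new difficulty is the reliability step: because $W_t$ is now quantum and a single POVM $\{D_j\}$ must decode correctly under every $t\in\theta$, I must build a \emph{universal} compound-channel decoder and check that it remains compatible with the covering-based secrecy splitting inside the same random code --- this is where the Approximation and Robustification Lemmas (and, for an infinite state set, a finite net over the channels) carry the load. A secondary point is that only the regularized multi-letter formula is obtained, so no single-letterization is attempted and the factor $\lim_{n\to\infty}\frac{1}{n}$ must be retained throughout.
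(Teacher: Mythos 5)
Your converse matches the paper's (reduce to a worst $t'$ and invoke the single-channel converse of \cite{Ca/Wi/Ye} and \cite{De}), but your achievability takes a genuinely different route from the paper's, and the step you yourself flag as the main obstacle is a real gap, not a technicality. You propose to keep the architecture of Theorem \ref{eq_1}: one random binned codebook per state, a covering/Chernoff argument for secrecy, and a single POVM $\{D_j\}$ obtained from a ``compound quantum packing lemma'' via HSW plus the Robustification Lemma. The difficulty is that with CSI at the transmitter the formula (\ref{e1q}) has $\max_{P,w_t}$ \emph{inside} $\min_t$, so the signal states $w_t(j)$ (and even the bin sizes $L_{n,t}$) must be chosen per state; a $t$-independent measurement would then have to decode $T$ mutually unrelated codebooks sent through $T$ different channels. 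That is not what HSW-type random coding made universal over $\theta$ delivers (a common codebook would only yield the $\max_P\min_t$, i.e.\ no-CSI, rate), and the Robustification Lemma is a tool for passing from compound to arbitrarily varying channels, not for producing such a decoder. Your sketch names this as the place where ``the load is carried'' but does not carry it.

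The paper dissolves the obstacle instead of attacking it: since $|\theta|=T<\infty$ and the wiretapper is assumed to know $t$ anyway, the sender first transmits $t$ itself over the compound channel $(W_t)_{t\in\theta}$ using a non-secret code of constant length $l\le \log T/\min_t\max_p\chi(p,W_t)$ (whose existence follows from \cite{Bj/Bo}), repeated $O(1)$ times to drive the error below any prescribed $\delta$. After this $O(1)$ prefix both ends know $t$, and the second phase is just the known-state quantum wiretap code of \cite{Ca/Wi/Ye} and \cite{De} at rate $\max_{P,w_t}\frac{1}{n}(\chi(P,B_t^{\otimes n})-\chi(P,Z_t^{\otimes n}))-\epsilon$; the overall decoder first measures the prefix and then applies the $t'$-specific POVM, which is a single legitimate $t$-independent measurement. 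In particular no covering lemma, Fannes estimate, or union bound over $\theta$ needs to be redone for this theorem --- all of that is already inside the cited single-channel result --- and the issue raised in \cite{Bj/Bo/So} about non-transmittable randomization is addressed exactly by this two-phase structure. If you want to salvage your single-phase plan, you would have to either prove a simultaneous-decoding result for state-dependent codebooks (essentially reproving the two-phase reduction in disguise) or accept the weaker $\max_P\min_t$ rate; as written, the lower bound is not established.
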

\begin{proof}
Our idea is to send the information in two parts, firstly, we send
the state information with finite blocks of finite bits with a code
$C_1$ to the receiver, and then, depending on $t$, we send the message
with a code $C_2^{(t)}$ in the second part.\\[0.15cm]
\it 1) Sending channel state information with finite bits\rm

For the first  part, we don't require that the first part
should be secure against the wiretapper, since we assume that the
wiretapper already has the full knowledge of the CSI.

By ignoring the security against the wiretapper, we have only to
look at the compound channel $(W_t)_{t \in \theta}$.
 Let $W = (W_t)_t$ be an arbitrary compound
classical quantum channel. Then by \cite{Bj/Bo}, for each $\lambda
\in (0, 1)$ the $\lambda$-capacity $C(W, \lambda)$ equals
\begin{equation}
C(W,\lambda) = \inf_{t} \max_{ p } \chi (p,W_t)\text{ .}
\end{equation}
If $\min_{t} \max_{p} \chi (p,W_t) > 0$ holds, then the sender can
build a code $C_1$ such that the CSI can be sent to the legal
receiver with a block with length $l \leq \frac{\log T}{\min_{t}
\max_{ p } \chi (p,W_t)}$. We need to do nothing because
in this case the right hand side of (\ref{e1q}) is zero.

Let $\mathit{c} = 1-\lambda$, then for any required upper bound
$\delta = 2^{-c'}$, with given $c'
> 0$, the sender can repeat sending this block
$\log \mathit{c} \cdot c'$ times, and the legal receiver  simply
picks out the state that he receives most frequently to find out $t$
with a error probability $  \leq \delta $.

The first part is of length $l \cdot\log \mathit{c} \cdot  c' =
O(1)$, which is negligible compared to the second part.\\[0.15cm]
\it 2) Message transformation when both the sender and the
legal receiver Know CSI \rm 

If both the sender and the legal receiver
have the full knowledge of $t$, then we only have to look at the
single wiretap channel $(W_t,V_t)$.

In \cite{Ca/Wi/Ye} and \cite{De}, it is shown that there exists an
$(n, J_{n}, \lambda )$ code for the quantum wiretap channel $(W,V)$
with
\begin{equation}\log J_{n} = \max_{P, w}( \chi (P,B^{\otimes n})- \chi (P,Z^{\otimes n}))-\epsilon\text{ ,}
\end{equation} for any $\epsilon>0$, where $B$ is the resulting random variable at the output
of legal receiver's channel and $Z$ the output of the wiretap
channel.

When the sender and the legal receiver both know $t$, they can build
an $(n, J_{n,t}, \lambda )$ code $C_2^{(t)}$ where
\begin{equation}\log J_{n,t} =  \max_{P, w_t}( \chi (V,B_t^{\otimes n})
- \chi (V,Z_t^{\otimes n})) -\epsilon\text{ .}\end{equation}

Thus,
\begin{equation}\label{e3}
C_{CSI} \geq \lim_{n \rightarrow \infty} \min_{t \in \theta}
\max_{P, w_t}\frac{1}{n}( \chi (P,B_t^{\otimes n})- \chi
(P,Z_t^{\otimes n}))\text{ .}\end{equation}\vspace{0.15cm}

\begin{Remark}   For the construction of the second part of our code, we use
random coding and request that the randomization can be sent (see
\cite{Ca/Wi/Ye}). However, it is shown in \cite{Bj/Bo/So} that the
randomization could not always be sent if we require that we use one
unique code which is secure against the wiretapper and
 suitable for every channel state,  i.e., it does not depend on $t$.
This is not a counterexample to our results above, neither to the
construction of $C_1$ nor to the construction of $C_2^{(t)}$,
because of following facts.

The first  part of our code does not need to be secure. For our
second part, the legal transmitters can use the following strategy:
At first they bulid a code $C_1=(E,\{D_j:j=1,\cdots,J_n\})$ and a
code $C_2^{(t)}=(E^{(t)},\{D^{(t)}_j:j=1,\cdots,J_n\})$ for every $t
\in \theta$. If the sender wants to send the CSI $t'\in\theta$ and
the message $j$, he encodes $t'$ with $E$ and $j$ with $E^{(t')}$,
then he sends both parts together through the channel. After
receiving both parts, the legal receiver decodes the first part with
$\{D_j:j\}$, and chooses the right decoders
$\{D^{(t')}_j:j\}\in\left\{\{D^{(t)}_j:j\}:t \in \theta\right\}$ to
decode the second part. With this strategy, we can avoid using one
unique code which is suitable for every channel state.
\end{Remark} 
\it 3) Upper bound \rm

For any $\epsilon>0$ choose $t'\in\theta$ such that
$C(V_{t'},W_{t'})\leq\inf_{t\in\theta} C(V_{t},W_{t})+\epsilon$.

From \cite{Ca/Wi/Ye} and \cite{De}, we know that the capacity of the quantum
wiretap channel $(W_{t'},V_{t'})$ cannot be greater than
\[\lim_{n \rightarrow \infty}\max_{P, w_{t'}}\frac{1}{n}( \chi (P,B_{t'}^{\otimes
n})- \chi (P,Z_{t'}^{\otimes n}))\text{ .}\]
Since we cannot exceed
the capacity of the worst wiretap channel, we have
\begin{equation}\label{e4}
C_{CSI} \leq \lim_{n \rightarrow \infty}\min_{t \in \theta} \max_{P,
w_t}\frac{1}{n}( \chi (P,B_t^{\otimes n})- \chi (P,Z_t^{\otimes
n})){ .}\end{equation} This together with (\ref{e3}) completes the
proof of Theorem \ref{e1}. \end{proof}\vspace{0.15cm}

\begin{Remark} In \cite{Wa}, it is shown that if for a given $t$ and any $n \in \mathbb{N}$
 \[I(P,B_t^{\otimes
n}) \geq I(P,Z_t^{\otimes n})\] holds for all $P\in P(A)$ and
$\{w_t(j): j=1,\cdots,J_n\} \subset S(H^{\otimes n})$, then
\begin{align*}&\lim_{n \rightarrow \infty} \max_{P, w_t}\frac{1}{n}(
\chi (P,B_t^{\otimes
n})- \chi (P,Z_t^{\otimes n}))\\
& = \max_{P, w_t}(\chi (P,B_t)- \chi (P,Z_t))\text{ .} \end{align*}

Thus if for every $t\in\theta$ and $n \in \mathbb{N}$,
\[I(P,B_t^{\otimes n}) \geq I(P,Z_t^{\otimes
n})\] holds for all $P\in P(A)$ and $\{w_t(j): j=1,\cdots,J_n\}
\subset S(H^{\otimes n})$, we have
\[C_{CSI} = \min_{t \in \theta} \max_{P, w_t}( \chi (P,B_t)- \chi (P,Z_t))\text{ .}\]
\end{Remark}\vspace{0.2cm}

So far, we assumed that $|\theta|$, the number of the channels,
is $< \infty$. Now we look at the case where $|\theta|$ can be
arbitrary.\vspace{0.15cm}

\begin{Theorem} \label{e1*}
For an arbitrary set  $\theta$ we have
\begin{equation}C_{CSI}=\lim_{n \rightarrow \infty} \inf_{t \in \theta} \max_{P, w_t}\frac{1}{n}( \chi (P,B_t^{\otimes
n})- \chi (P,Z_t^{\otimes n}))\text{ .}\end{equation}
\end{Theorem}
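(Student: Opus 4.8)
The plan is to reduce the arbitrary index set $\theta$ to the finite case already established in Theorem \ref{e1} by replacing $(W_t,V_t)_{t\in\theta}$ with a finite approximating subfamily, in the spirit of \cite{Ah/Ca}. Since $H$ is finite dimensional, each pair $(W_t,V_t)$ is a point in the compact metric space of pairs of completely positive trace preserving maps on $\mathcal{S}(H)$, metrized by the diamond norm $\|\cdot\|_\diamond$ of the two components (equivalently the trace distance of the Choi operators). Compactness yields total boundedness, so for every $\tau>0$ there is a finite set $\theta_\tau=\{t_1,\dots,t_{N(\tau)}\}\subseteq\theta$ such that each $t\in\theta$ has a representative $t_i\in\theta_\tau$ with $\|W_t-W_{t_i}\|_\diamond\le\tau$ and $\|V_t-V_{t_i}\|_\diamond\le\tau$. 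This covering step converts the infinite compound problem into a finite one at the price of a controlled approximation error.

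\textbf{Lower bound.} Write $f_n(t):=\max_{P,w_t}\frac1n(\chi(P,B_t^{\otimes n})-\chi(P,Z_t^{\otimes n}))$, so that the asserted capacity is $R^\ast:=\lim_{n\to\infty}\inf_{t\in\theta}f_n(t)$; the limit exists because $G_n:=\inf_t\,n f_n(t)$ is superadditive in $n$ (product inputs give superadditivity of the private information, and the infimum over $t$ preserves it), whence $G_n/n$ converges by Fekete's lemma. For blocklength $n$ I would build the net $\theta_{\tau_n}$ and run the construction in the proof of Theorem \ref{e1} on $(W_t,V_t)_{t\in\theta_{\tau_n}}$: the transmitter, who knows the true state $t$, first announces the index $i$ of its nearest representative through the prefix code $C_1$, and then uses the length-$n$ wiretap code $C_2^{(t_i)}$, which achieves rate $f_n(t_i)-o(1)$ and is secure against wiretapper $t_i$. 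Two robustness estimates transfer this code to the true channel: the legitimate error on $W_t^{\otimes n}$ differs from that on $W_{t_i}^{\otimes n}$ by at most $\|W_t^{\otimes n}-W_{t_i}^{\otimes n}\|_\diamond\le n\tau_n$, and by Fannes' inequality (Lemma \ref{eq_9}) the leakage $\chi(J;Z_t^{\otimes n})$ exceeds $\chi(J;Z_{t_i}^{\otimes n})$ by at most a term of order $n\tau_n\cdot n\log d-n\tau_n\log(n\tau_n)$. Taking $\tau_n=o(n^{-2})$, say $\tau_n=n^{-3}$, drives both defects to $0$, while $\log N(\tau_n)=O(\log(1/\tau_n))=O(\log n)$, so the prefix $C_1$ occupies only $o(n)$ symbols and does not affect the rate. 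As every $t$ lies in some cell, the resulting single code serves all $t\in\theta$ and achieves, after normalization, $\min_i f_n(t_i)-o(1)$, which by density and robustness equals $\inf_{t\in\theta}f_n(t)-o(1)$; letting $n\to\infty$ gives $C_{CSI}\ge R^\ast$.

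\textbf{Upper bound.} Here I would not use the crude ``worst channel'' bound of Theorem \ref{e1}, which only yields $C_{CSI}\le\inf_t C(V_t,W_t)=\inf_t\lim_n f_n(t)$, a quantity that is $\ge R^\ast$ and need not equal it. Instead I would take any sequence of $(n,J_n,\lambda)$ codes with vanishing error and leakage and note that, with CSI, the encoding for a fixed state $t$ is an ordinary length-$n$ wiretap code for $(W_t,V_t)$; applying the converse of \cite{Ca/Wi/Ye},\cite{De} at blocklength $n$ gives $\frac1n\log J_n\le f_n(t)+o(1)$ for each $t$, hence $\frac1n\log J_n\le\inf_{t\in\theta}f_n(t)+o(1)$. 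Passing to the limit yields $C_{CSI}\le\lim_n\inf_t f_n(t)=R^\ast$, matching the lower bound.

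The step I expect to be the main obstacle is making the two robustness estimates quantitative and simultaneous. Because the tensor-power distance $\|V_t^{\otimes n}-V_{t_i}^{\otimes n}\|_\diamond$ grows linearly in $n$, the Fannes defect in the \emph{unnormalized} secrecy quantity is controlled only once $\tau_n$ decays faster than $n^{-2}$, and one must check that this rate is still compatible with $N(\tau_n)$ being small enough that announcing the CSI index stays negligible. This balancing of net fineness against blocklength, together with the use of the finite-blocklength per-state converse (rather than the asymptotic worst-channel bound) to pin the value at the regularized $\lim_n\inf_t f_n$, is where the real work lies; the remainder is a direct reduction to Theorem \ref{e1}.
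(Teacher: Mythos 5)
Your achievability argument is essentially the paper's: a finite $\tau$-net of the channel pairs (the paper invokes Lemma \ref{e2} of Milman--Schechtman with $K\le(3/\tau)^{2d^4}$ rather than abstract compactness, but the explicit polynomial covering bound is exactly what you need for $\log N(\tau_n)=O(\log(1/\tau_n))$), a short prefix announcing the net index, the code $C_2^{(t')}$ for the representative, and robustification via $\|W_t^{\otimes n}-W_{t'}^{\otimes n}\|_\lozenge\le n\tau$ plus Fannes for the leakage. The parametrization differs: the paper fixes $\tau=\xi/(-\log\xi)$ and a short blocklength $N=O(-\log\xi)$, letting $\xi\to0$, whereas you couple $\tau_n=n^{-3}$ to the full blocklength; both make $n\tau_n$ and the Fannes defect vanish, and your telescoping inequality $\|W^{\otimes n}-W'^{\otimes n}\|_\lozenge\le n\|W-W'\|_\lozenge$ is the correct form of the estimate the paper attributes (somewhat misleadingly) to multiplicativity of $\|\cdot\|_\lozenge$. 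You also supply two things the paper leaves implicit: existence of the limit $\lim_n\inf_t f_n(t)$ via superadditivity and Fekete, and a converse that actually pins down the regularized quantity. The paper's converse takes a near-worst single channel $t'$ and applies the asymptotic single-channel converse, which only yields $C_{CSI}\le\inf_t\lim_n f_n(t)$; since $\lim_n\inf_t f_n(t)\le\inf_t\lim_n f_n(t)$ and the two need not coincide for infinite $\theta$, your per-blocklength weak converse (Fano plus Holevo bound applied to the state-$t$ restriction of the code, giving $\frac1n\log J_n\le f_n(t)+o(1)$ for every $t$ simultaneously) is the cleaner route and closes that gap. The only step you share with the paper that deserves more care is treating $f_n(t_i)-o(1)$ as achievable by a length-$n$ code for the single wiretap channel; strictly this requires the usual super-letter regularization of the Cai--Winter--Yeung/Devetak coding theorem, but that is a presentational issue common to both arguments.
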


\begin{proof}
Let $W: \mathcal{S}(H) \rightarrow  \mathcal{S}(H)$ be a linear map,
then let
\begin{equation}\|W \|_{\lozenge}:=\sup_{n\in \mathbb{N}}\max_{a\in S(\mathbb{C}^n
\otimes H), \|a\|_1=1}\| (id_n \otimes W)(a)\|_1\end{equation}
where $\|\cdot\|_1$ stands for the trace norm.

It is well known \cite{Pa} that this norm is multiplicative, i.e.
$\|W  \otimes W' \|_{\lozenge} = \|W\|_{\lozenge}\cdot \|W'
\|_{\lozenge}$.

A $\tau$-net in the space of the completely positive trace
preserving maps  is a finite set
$\left({W^{(k)}}\right)_{k=1}^{K}$
 with the property that for each $W$
there is at least one $k \in \{1, \cdots , K\}$
 with $ \| W-W^{(k)} \|_{\lozenge} < \tau$.
\begin{Lemma}[$\tau-$net \cite{Mi/Sch}]\label{e2}
For any $\tau \in (0,1]$ there is a $\tau$-net of quantum-channels
$\left(W_t^{(k)}\right)_{k=1}^{K}$ in  the space of the completely
positive trace preserving maps with $K \leq
(\frac{3}{\tau})^{2d^4}$, where $d = \dim H$.
\end{Lemma}\vspace{0.15cm}

If  $|\theta|$ is arbitrary, then for any $\xi >0$ let $\tau =
\frac{\xi}{-\log \xi}$. By Lemma \ref{e2} there exists a finite set
$\theta'$ with $|\theta'|\leq (\frac{3}{\tau})^{2d^4}$ and
$\tau$-nets $\left(W_{t'}\right)_{t' \in \theta'}$,
$\left(V_{t'}\right)_{t' \in \theta'}$ such that for every $
t\in\theta $ we can find a $t' \in \theta'$  with $\left\| W_t -
W_{t'}\right\|_{\lozenge}\leq \tau$ and $\left\| V_t -
V_{t'}\right\|_{\lozenge}\leq \tau$. For every $t'\in\theta'$ the
legal transmitters build a code $C_2^{(t')}
=\{w_{t'},\{D_{t',j}:j\}\}$. Since by \cite{Ca/Wi/Ye}, the error of
the code $C_2^{(t')}$ decreases exponentially to its length, we can
find an $N=O(-\log\xi)$ such that for all $t'\in \theta'$ it holds
\begin{equation}
 \frac{1}{J_N} \sum_{j=1}^{J_N} \mathrm{tr}\left(
W_{t'}^{\otimes N}\left( w_{t'}(j) \right)D_{t',j}\right)\geq 1 -
\lambda-\xi\text{ ,}\label{con1}\end{equation}
\begin{equation} \chi(J;Z_{t'}^{\otimes N})\leq
\xi\label{con2}\text { ,}\end{equation}

Then, if the sender obtains the state information ``$t$'' , he can
send with finite bits ``$t'$'' to the legal receiver in the first
part, and then they build a code $C_2^{(t')}$ that fulfills
(\ref{con1}) and (\ref{con2}) to transmit the message.

For every ${t'}$ and $j$ let $\psi_{t'}(j) \in H^{\otimes n} \otimes
H^{\otimes n}$ be an arbitrary purification of the state
$w_{t'}(j)$. Then we have
\begin{align*}&\mathrm{tr}\left[ \left(W_t^{\otimes N} - W_{t'}^{\otimes N}\right)(w_{t'}(j))\right]\\
&=\mathrm{tr} \left(\mathrm{tr}_{H^{\otimes N}} \left[id_H^{\otimes
N} \otimes (W_t^{\otimes N}-W_{t'}^{\otimes N})
\left( |\psi_{t'}(j)\rangle \langle \psi_{t'}(j) |\right)\right]\right)\\
& = \mathrm{tr} \left[id_H^{\otimes N} \otimes (W_t^{\otimes
n}-W_{t'}^{\otimes N})
\left( |\psi_{t'}(j)\rangle \langle \psi_{t'}(j) |\right)\right]\\
& = \left\|id_H^{\otimes N} \otimes (W_t^{\otimes N}-W_{t'}^{\otimes
N})\left( |\psi_{t'}(j)\rangle \langle \psi_{t'}(j) |\right)\right\|_1\\
& \leq \| W_t^{\otimes N}-W_{t'}^{\otimes N}\|_{\lozenge}
\cdot\left\|\left( |\psi_{t'}(j)\rangle \langle \psi_{t'}(j) |\right)\right\|_1\\
&\leq N\tau\text{ .}
\end{align*}
The first equality follows from the definition of purification. the
second equality follows from the definition of trace. The third
equality follows from the fact that $\|A\|_1=\mathrm{tr}(A)$ for any
self-adjoint, positive-semidefinite bounded linear operator $A$. The
first inequality follows by the definition of
$\|\cdot\|_{\lozenge}$. The second inequality follows from the facts
that $\|\left( |\psi_{t'}(j)\rangle \langle \psi_{t'}(j)
|\right)\|_1=1$ and $\left\| W_t ^{\otimes N}-W_{t'}^{\otimes
N}\right\|_{\lozenge}= \left\| \left(W_t-W_{t'}\right)^{\otimes
N}\right\|_{\lozenge} = N\cdot\left\| W_t -
W_{t'}\right\|_{\lozenge}$, since $\|\cdot\|_{\lozenge}$ is
multiplicative.\vspace{0.15cm}

It follows\begin{align}   &\frac{1}{J_N} \sum_{j=1}^{J_N}
\mathrm{tr}\left( W_t^{\otimes N}\left( w_{t'}(j)
\right)D_{t',j}\right) \notag\\
&-\frac{1}{J_N} \sum_{j=1}^{J_N}
\mathrm{tr}\left(
W_{t'}^{\otimes N}\left( w_{t'}(j) \right)D_{t',j}\right) \notag\\
&=\frac{1}{J_N} \sum_{j=1}^{J_N} \mathrm{tr}\left[ \left(
W_t^{\otimes N}
-W_{t'}^{\otimes N}\right)\left( w_{t'}(j) \right)D_{t',j} \right] \notag\\
&\leq \frac{1}{J_N} \sum_{j=1}^{J_N} \mathrm{tr}\left[ \left(
W_t^{\otimes N}
-W_{t'}^{\otimes N}\right)\left( w_{t'}(j) \right)\right] \notag\\
&\leq \frac{1}{J_N}J_N N\cdot\tau \notag\\
&= N\tau  \text{ .}\label{iea}\end{align}

$N\tau$ tends to zero when $\xi$ goes to zero, since $N=O(-\log\xi)$.

Let $J$ be a probability distribution  uniformly distributed on
$\{1,\cdots,J_N \}$, and
 $\{\rho(j) : j = 1,\cdots,J_n \}$  be a set of states
labeled by elements of $J$. By Lemma \ref{eq_9} we have
\begin{align}&\|\chi(J,V_t)-\chi(J,V_{t'})\| \notag\\
&\leq \Vert S\left(\sum_{j=1}^{J_N} J(j)
 V_t(\rho(j))\right)-S\left(
\sum_{j=1}^{J_N} J(j)V_{t'}(\rho(j))\right)\Vert \notag\\
&+\Vert \sum_{j=1}^{J_N} J(j)
 S\left(V_t(\rho(j))\right)
\sum_{j=1}^{J_N} J(j)S\left(V_{t'}(\rho(j))\right)\Vert \notag\\
 &\leq 2\tau\log d-2\tau\log\tau \text{ ,}\label{ieb}\end{align}
 since by $\left\| V_t -  V_{t'}\right\|_{\lozenge}\leq \tau$, it holds
 $\left\| V_t(\rho) -  V_{t'}(\rho)\right\|\leq \tau$ for all
 $\rho \in \mathcal{S}(H)$.\vspace{0.15cm}

By (\ref{iea}) and (\ref{ieb}) it holds
\[\max_{t}
 \frac{1}{J_N} \sum_{j=1}^{J_N} \mathrm{tr}\left(
W_t^{\otimes N}\left( w_{t'}(j) \right)D_{t',j}\right)\geq 1
-\lambda-\xi-N\tau\text{ ,}\]
\[ \chi(J;Z_t^{\otimes N})\leq \xi+2\tau\log d-2\tau\log\tau\text{ .}\]
Since $N\tau$ and $2\tau\log d$ both tend to zero when $\xi$ goes to
zero, we have
\[ \lim_{n \rightarrow \infty} \max_{t \in \theta}
 \frac{1}{J_n} \sum_{j=1}^{J_n} \mathrm{tr}\left(
W_t^{\otimes n}\left( w_{t'}(j) \right)D_{t',j}\right)\geq 1
-\lambda\text{ ,}\]
\[\lim_{n \rightarrow \infty}  \chi(J;Z_{t}^{\otimes n})= 0 \text{ .}\]

 The bits that the sender uses to
transform the CSI is large but constant, so it is still  negligible
compared to the second part.
We obtain
\begin{equation}C_{CSI}>\lim_{n \rightarrow \infty}\inf_{t \in \theta} \max_{P, w_t}\frac{1}{n}( \chi (P,B_t^{\otimes
n})- \chi (P,Z_t^{\otimes n}))\text{ .}
\end{equation}

The proof of the converse is similar to those given in 
the proof of Theorem \ref{e1}, where we consider a worst $t'$.
\end{proof}\vspace{0.15cm}

\begin{Remark} For Theorem \ref{e1} and Theorem \ref{e1*}, we have
 only required that the probability that the legal receiver does
not obtain the correct message tends to zero when the code length
goes to infinity. We have not specified how fast it should tends to
zero. If we analyze the relation between the error probability
 $\varepsilon$ and the code length, then we have the following facts.

In the case of finite $\theta$, let $\varepsilon_1$ denote the
probability that the legal receiver does not obtain the correct CSI,
and let $\varepsilon_2$ denote the probability that the legal
receiver, having CSI,  does not obtain the correct message. Since
the length of first part of the code is $l \cdot \log \mathit{c}
\cdot c' = O(\log \varepsilon_1)$, as we defined in Section
\ref{seqcqw}, we have $\varepsilon_1^{-1}$ is $O(\exp (l \cdot \log
\mathit{c} \cdot c'))=O(\exp(n))$,
where $n$ stands for  the length of first part.
 And for the second part of the code,
$\varepsilon_2$ decreased exponentially to the length of the second
part, as proven in \cite{Ca/Wi/Ye}. Thus, the error probability
 $\varepsilon = \max\{\varepsilon_1, \varepsilon_2\}$
decreases exponentially to the code length in the case of finite
$\theta$.

If $\theta$ is infinite, let $\varepsilon_1$ denote the
probability that the legal receiver does not obtain the correct CSI.
Then we have to build two  $\tau$-nets,
 each contains $O((\frac{-\log\varepsilon_1}{\varepsilon_1})^{-2d^4})$ channels.
If we want to send the CSI of these $\tau$-nets, $l$, as defined in
 Section \ref{seqcqw}, will be
$O(-2d^4\cdot\log(\varepsilon_1\log\varepsilon_1))$, this means here
$\varepsilon_1^{-1}$ will  be $O(\exp(\frac{n}{4d^4}))=O(\exp(n))$,
where $n$ stands for  the length of first part. So we can
still achieve that the error probability decreases exponentially to
the code length in case of infinite $\theta$.
\end{Remark}

\section*{Acknowledgment}
We thank Igor Bjelakovic and Holger Boche for useful discussions.
Support by the Bundesministerium f\"ur Bildung und Forschung (BMBF)
via grant 01BQ1052 is gratefully acknowledged.

\end{document}